\theoremstyle{plain}
\newtheorem*{theorem*}{Theorem}
\newtheorem{theorem}{Theorem}
\begin{document}
\preprint{APS/123-QED}

\title{A Triangle Governs Genuine Tripartite Entanglement}

\author{Songbo Xie}
\email{sxie9@ur.rochester.edu}
\author{Joseph H. Eberly}
\affiliation{Center for Coherence and Quantum Optics, and Department of Physics and Astronomy, University of Rochester, Rochester NY 14627 USA.
}

\date{\today}

\begin{abstract}
A previously overlooked constraint for the distribution of entanglement in three-qubit systems is exploited for the first time and used to reveal a new genuine tripartite entanglement measure. It is interpreted as the area of a so-called concurrence triangle and is compared with other existing measures. The new measure is found superior to previous attempts for different reasons. A specific example is illustrated to show that two tripartite entanglement measures can be inequivalent due to the high dimensionality of the Hilbert space.
\end{abstract}

\maketitle

\noindent{\bf Introduction.}\quad 
A striking feature of modern physics is entanglement, which describes the tensorial non-biseparability of states for two or more parties that may be well-separated in location.
Following the two-party teleportation by Bennett \textit{et al.} \cite{bennett1993}, a faithful three-party teleportation protocol was invented by Karlsson and Bourennane \cite{karlsson1998} and was shown by Hillery \textit{et al.} \cite{hillery1999} to be less vulnerable to cheating and eavesdropping than the former two-party method. This established entanglement as a powerful resource in not only two-party, but also three-party or potentially even more-party systems. A multipartite entanglement (ME) measure is thus needed in order to quantify the resource. 

Entanglement measures for two-party (especially two-qubit) systems have been well studied (see \cite{hill1997, vidal1999, alonso2016}). The Schmidt decomposition for two-qubit systems allows for only one free parameter, \textit{e.g.}, the angle $\theta$ in
\begin{equation}
    |\psi\rangle=\cos\theta|00\rangle+\sin\theta|11\rangle,\quad 0\leq\theta\leq\pi/4.
\end{equation}
Therefore all bipartite measures for such systems are equivalent in the sense that they all give the same result when answering the question whether one state is more entangled than another \cite{singh2020}. But searches for measures of multipartite entanglement still encounter difficulties.

Even for three-qubit systems the situation is much more complicated. It was found by Ac\'in \textit{et al.} \cite{acin2000} that five free parameters are needed in the generalized Schmidt decomposition for a generic three-qubit system, and thus one single measure may not be sufficient in order to fully characterize the properties of multipartite entanglement (see Vidal \cite{vidal2000}).

In addition, a new significant concept, labeled as ``genuine'', has been introduced for multi-party systems. All three-qubit states were clearly separated by D\"ur, Vidal and Cirac \cite{dur2000} into four distinct classes: product states, biseparable states, the GHZ class and the W class. In the former two classes, at least one qubit is disentangled from the rest of the system. In contrast, the three qubits in GHZ class and W class are called genuinely entangled. An important background fact is that three-party teleportation may be expected to succeed if and only if the state shared by Alice, Bob and Charlie is genuinely entangled. Thus a good ME measure has to satisfy the following two conditions to be called a {\it genuine multipartite entanglement} (GME) measure. The two conditions were identified by Ma {\it{et al.}} \cite{ma2011} as:\\[0.7em]
\noindent (a) The measure must be zero for all product and biseparable states.\\
\noindent (b) The measure must be positive for all non-biseparable states (GHZ class and W class in the three-qubit case).\\[0.7em]
\noindent Only a GME measure can faithfully quantify the three-party entanglement used as a resource in the teleportation protocol.

The open difficulties make the measurement of multipartite entanglement mysterious but interesting. Here we describe further progress by advancing a better GME measure specifically for three-qubit systems.

A series of ME measures have already been invented and developed but most of them are not GME. On the one hand, examples such as multipartite monotones by Barnum and Linden \cite{barnum2001}, Schmidt measure $P$ by Eisert and Briegel \cite{eisert2001, hein2004}, global entanglement $Q$ by Meyer {\it et al.} \cite{meyer2002,brennen2003} as well as generalized multipartite concurrence $C_N$ by Carvalho {\it et al.} \cite{carvalho2004} fail to satisfy condition (a). On the other hand, the famous 3-tangle by Coffman {\it et al.} \cite{coffman2000,miyake2003}, as well as entanglement based on ``filters'' by Osterloh and Siewert \cite{osterloh2005}, GME based on PPT mixture by Jungnitsch {\it et al.} \cite{jungnitsch2011}, and the multi-party coherence advanced by Qian {\it et al.} \cite{qian2020} violate condition (b). There are also several measures based on identifying the distance between a given state and its closest product state (see examples in \cite{shimony1995, plenio2001, wei2003}). From their definitions, they violate condition (a). Here we are introducing a new measure that does satisfy both GME requirements (a) and (b). We will identify several existing GME measures and quantify the new measure's superiority to all of them.\\

\noindent{\bf Triangle Area and GME} \quad The definition of our tripartite entanglement measure uses the well-known bipartite concurrence of Wootters (see \cite{hill1997,wootters1998}). For a generic three-qubit system, when considering the entanglement between one qubit and the remaining two taken together as an ``other'' single party, we have three one-to-other bipartite entanglements, namely $C_{1(23)}$, $C_{2(31)}$ and $C_{3(12)}$, where a subscript $i$ refers to the system's $i$th qubit.

Those three bipartite entanglements were found not completely independent by Qian {\it et al.} \cite{qian2018}. In their work, the \textit{entanglement polygon inequality} states that one entanglement cannot exceed the sum of the other two,
\begin{equation}
    C_{i(jk)}\leq C_{j(ki)}+C_{k(ij)}.
\end{equation}
A stronger version for this inequality was found by Zhu and Fei in \cite{zhu2015}, where all three concurrences are replaced by their squared forms,
\begin{equation}\label{trianglerelation}
    C_{i(jk)}^2\leq C_{j(ki)}^2+C_{k(ij)}^2.
\end{equation}
An obvious geometric interpretation \cite{qian2018} for these inequalities is that the three squared (or not) one-to-other concurrences can represent the lengths of the three edges of a triangle. When referred to the squared formula \eqref{trianglerelation}, we will call it the {\it concurrence triangle}. This is shown in Fig. \ref{fig:triangle}.

There is a physical meaning for the perimeter of the concurrence triangle. It is a tripartite entanglement measure considered by Meyer and Wallach \cite{meyer2002}, and also interpreted by Brennen \cite{brennen2003}, called {\it global entanglement}. As listed in Fig. \ref{fig:table}, global entanglement is zero only for product states, and is positive for both biseparable and non-biseparable states. Thus it violates condition (a) and is not a GME measure.

The area of the concurrence triangle is another intriguing quantity. It is zero for both product and biseparate states, and thus satisfies condition (a) for GME. However, there exists one class of concurrence triangle with zero area, but corresponding to non-biseparable states. If we reckon the area as a tripartite entanglement measure, it seems to violate condition (b) and is thus not a GME. This is included in the list in Fig. \ref{fig:table}. 

\begin{figure}[t]
    \centering
    \includegraphics[width=0.3\textwidth]{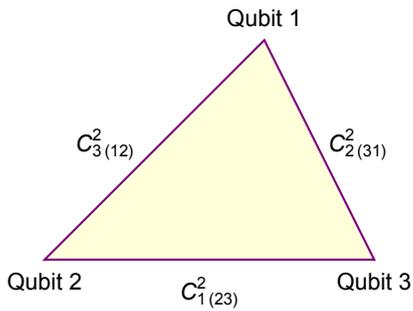}
    \caption{The concurrence triangle for a three-qubit system. The square of the three one-to-other bipartite concurrences are equal to the lengths of the three edges.}
    \label{fig:triangle}
\end{figure}

Our first result, in Theorem 1, is to show that this class of concurrence triangle does not even exist.
\begin{theorem}\label{thm:noarea}
    The area of the concurrence triangle is zero iff it has at least one edge with zero length.
\end{theorem}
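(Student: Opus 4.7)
My plan splits naturally into the easy ``if'' direction and the substantive ``only if'' direction.

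For the ``if'' direction: suppose without loss of generality that $C_{1(23)}=0$. Applying the Zhu--Fei inequality \eqref{trianglerelation} in the two remaining orientations yields both $C_{2(31)}^{2}\le C_{3(12)}^{2}$ and $C_{3(12)}^{2}\le C_{2(31)}^{2}$, so the two nonzero edges coincide in length and the triangle collapses to a doubled segment of zero area.

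For the ``only if'' direction I would first use Heron's formula to recast ``area $=0$'' as saturation of one of the bounds in \eqref{trianglerelation}; up to relabeling this reads $C_{1(23)}^{2}=C_{2(31)}^{2}+C_{3(12)}^{2}$. The key input is then the Coffman--Kundu--Wootters monogamy identity~\cite{coffman2000} for pure three-qubit states,
\begin{equation*}
    C_{i(jk)}^{2}=C_{ij}^{2}+C_{ik}^{2}+\tau_{123},
\end{equation*}
where $C_{ij}$ is the Wootters concurrence of the reduced two-qubit state $\rho_{ij}$ and $\tau_{123}\ge 0$ is the three-tangle, which is independent of the choice of $i$. Substituting this into every term of the saturation equation, the $C_{ij}^{2}$ contributions and two of the three copies of $\tau_{123}$ cancel, leaving
\begin{equation*}
    2\,C_{23}^{2}+\tau_{123}=0.
\end{equation*}
Non-negativity of both summands forces $\tau_{123}=0$ and $C_{23}=0$ simultaneously.

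The closing step is to apply the D\"ur--Vidal--Cirac classification~\cite{dur2000}. The GHZ class requires $\tau_{123}>0$ and is thus ruled out, and the W class, although it has $\tau_{123}=0$, is characterized by all three pairwise concurrences $C_{12},C_{13},C_{23}$ being strictly positive, so $C_{23}=0$ excludes it as well. The state therefore belongs to the biseparable or fully separable class, and in either case at least one one-to-other concurrence $C_{i(jk)}$ vanishes---equivalently, at least one edge of the concurrence triangle has length zero, as claimed. The main obstacle I anticipate is a clean justification of ``$C_{12},C_{13},C_{23}>0$ throughout the W class''; should that invocation feel too heavy, a self-contained alternative is to substitute Ac\'in's generalized Schmidt decomposition, in which $\tau_{123}=4\lambda_{0}^{2}\lambda_{4}^{2}$ forces $\lambda_{0}\lambda_{4}=0$ and $C_{23}=0$ then pins the remaining amplitudes into a $2|13$- or $3|12$-product form, from which the vanishing edge is read off directly.
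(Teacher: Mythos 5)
Your proof is correct, but it takes a genuinely different route from the one in the paper's Appendix. The paper argues by contradiction using a second bipartite monotone, the normalized Schmidt weight $Y=f(C^2)=1-\sqrt{1-C^2}$: since $f(x)/x$ is strictly increasing on $(0,1]$, $f$ is strictly superadditive on positive arguments, so a degenerate triangle with all three edges positive ($c=a+b$ with $a,b>0$) would give $Y(a)+Y(b)<Y(c)$, violating the entanglement polygon inequality that $Y$ itself satisfies \cite{qian2018}. That argument needs no state classification and no monogamy identity. Your route instead feeds the Heron saturation condition $C_{1(23)}^2=C_{2(31)}^2+C_{3(12)}^2$ into the Coffman--Kundu--Wootters identity, correctly obtaining $2C_{23}^2+\tau_{123}=0$, and then closes with the D\"ur--Vidal--Cirac classification. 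The step you flag as delicate---strict positivity of all pairwise concurrences throughout the W class---does hold: the canonical W-class form $\sqrt{a}\,|100\rangle+\sqrt{b}\,|010\rangle+\sqrt{c}\,|001\rangle+\sqrt{d}\,|000\rangle$ with $a,b,c>0$ gives each pairwise concurrence as twice the square root of a product of two of $a,b,c$, hence positive, so no appeal to Ac\'in's decomposition is needed. Your approach leans on heavier (though entirely standard) machinery, but in exchange it extracts structural information the paper's proof does not: a degenerate concurrence triangle would force both the 3-tangle and the pairwise concurrence opposite the longest edge to vanish simultaneously. The paper's argument is leaner, self-contained apart from the polygon inequality for $Y$, and---because it only uses that $f(x)/x$ is strictly monotone---hints at how the no-area statement could transfer to other pairs of entanglement monotones.
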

\noindent This is called the {\it Triangle No-Area Theorem}. The proof is not difficult and is given as an Appendix. Generically, a triangle has zero area when its three vertices are colinear. Thm. \ref{thm:noarea} excludes the possibility that the three vertices are colinear but no two vertices coincide, which corresponds to the non-biseparable states. With this in mind, we know that the area of the concurrence triangle also satisfies condition (b). And so we have our next result:
\begin{theorem}\label{thm:fill}
    The area of the concurrence triangle is a genuine tripartite entanglement measure.
\end{theorem}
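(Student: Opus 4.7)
The plan is to reduce the claim to Theorem~\ref{thm:noarea} together with the standard bipartite characterization of pure-state entanglement, and to verify the two GME conditions (a) and (b) separately.

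First, for condition (a), I would show that every product or biseparable pure state forces at least one edge of the concurrence triangle to vanish. For a fully product state $|\psi\rangle=|a\rangle\otimes|b\rangle\otimes|c\rangle$ all three one-to-other reduced states are pure, so each $C_{i(jk)}$ equals zero and the triangle degenerates to a point of zero area. For a biseparable state in which qubit $i$ factors out, $|\psi\rangle=|a\rangle_i\otimes|\chi\rangle_{jk}$, the $i|jk$ cut is a product, so $C_{i(jk)}=0$. A triangle with a vanishing edge length has zero area by elementary Euclidean geometry, independently of Theorem~\ref{thm:noarea}. Hence condition (a) holds.

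Next, for condition (b), I would invoke the standard fact that the pure-state bipartite concurrence across a cut vanishes if and only if the state is separable across that cut. A genuinely entangled three-qubit pure state (GHZ or W class) is, by definition, inseparable across each of the three bipartitions $i|jk$, $j|ki$, $k|ij$, so all three one-to-other concurrences and hence the three squared edge lengths $C_{i(jk)}^2$, $C_{j(ki)}^2$, $C_{k(ij)}^2$ are strictly positive. The contrapositive of the Triangle No-Area Theorem then delivers a strictly positive area, giving condition (b).

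The main obstacle I anticipate is not in either bullet above but in the gap that Theorem~\ref{thm:noarea} is designed to close: three strictly positive edge lengths could a priori still enclose zero area if the triangle inequality \eqref{trianglerelation} were saturated, i.e. the three vertices were colinear but distinct. From the edge-length data alone, satisfying only \eqref{trianglerelation}, this cannot be excluded; it is precisely the non-trivial case Theorem~\ref{thm:noarea} rules out by exploiting the algebraic structure relating the three one-to-other concurrences of a three-qubit pure state. Consequently all the substantive work is carried by Theorem~\ref{thm:noarea}, and Theorem~\ref{thm:fill} follows as a short corollary once (a) and (b) are checked as above.
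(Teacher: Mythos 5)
Your verification of conditions (a) and (b) is correct and follows essentially the same route as the paper: (a) because any product or biseparable state forces at least one edge to vanish, and (b) by combining the fact that a genuinely entangled pure state has all three one-to-other concurrences strictly positive with the contrapositive of Theorem~\ref{thm:noarea}. You are also right that all the substantive content sits in Theorem~\ref{thm:noarea}; your discussion of the colinear-but-distinct-vertices case is exactly the gap that theorem closes.

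There is, however, one ingredient in the paper's proof that your proposal omits entirely: \emph{local monotonicity}. To qualify as an entanglement \emph{measure}, and not merely a witness that vanishes exactly on the set of non-genuinely-entangled states, the area must be shown to be non-increasing under LQCC. The paper devotes a sentence to this, arguing that since each edge length $C^2_{i(jk)}$ is itself an entanglement monotone, the area ``naturally'' inherits monotonicity. You should add this step. Be aware that the inheritance is less automatic than the paper suggests: Heron's area is not a monotonically increasing function of the three edge lengths taken separately (the partial derivative of the squared area with respect to edge $a$ is proportional to $a(b^2+c^2-a^2)$, which is negative for an obtuse triangle, so shrinking the longest edge can \emph{increase} the area). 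A rigorous monotonicity argument therefore needs more than the observation that each edge is individually a monotone. As written, your proposal simply does not address monotonicity at all, and that is the one genuine gap relative to the paper's argument.
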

\begin{proof}
    The lengths of the three edges of the triangle are one-to-other concurrences and are thus non-increasing under local quantum operations assisted with classical communications (LQCC). Local monotonicity \cite{plenio1998} of the triangle area is naturally inherited. As is discussed above, the area satisfies both conditions (a) and (b) and is thus a GME measure.
\end{proof}

The expression for the area is given by Heron's formula
\begin{eqnarray}\label{heron}
        &F_{123}\equiv\dfrac{4}{\sqrt{3}}\sqrt{Q\left(Q-C^2_{1(23)}\right)\left(Q-C^2_{2(13)}\right)\left(Q-C^2_{3(12)}\right)},\nonumber\\
        &\text{where}\quad Q=\dfrac{1}{2}\left(C^2_{1(23)}+C^2_{2(13)}+C^2_{3(12)}\right).
\end{eqnarray}
$Q$ is the half-perimeter and thus equivalent to the global entanglement, while the prefactor $4\left/\sqrt{3}\right.$ is for normalization. We denote the area for the three-qubit triangle system as $F_{123}$, and give it a name, the \textit{concurrence fill}.

\begin{figure*}[t]
    \includegraphics[width=0.75\textwidth]{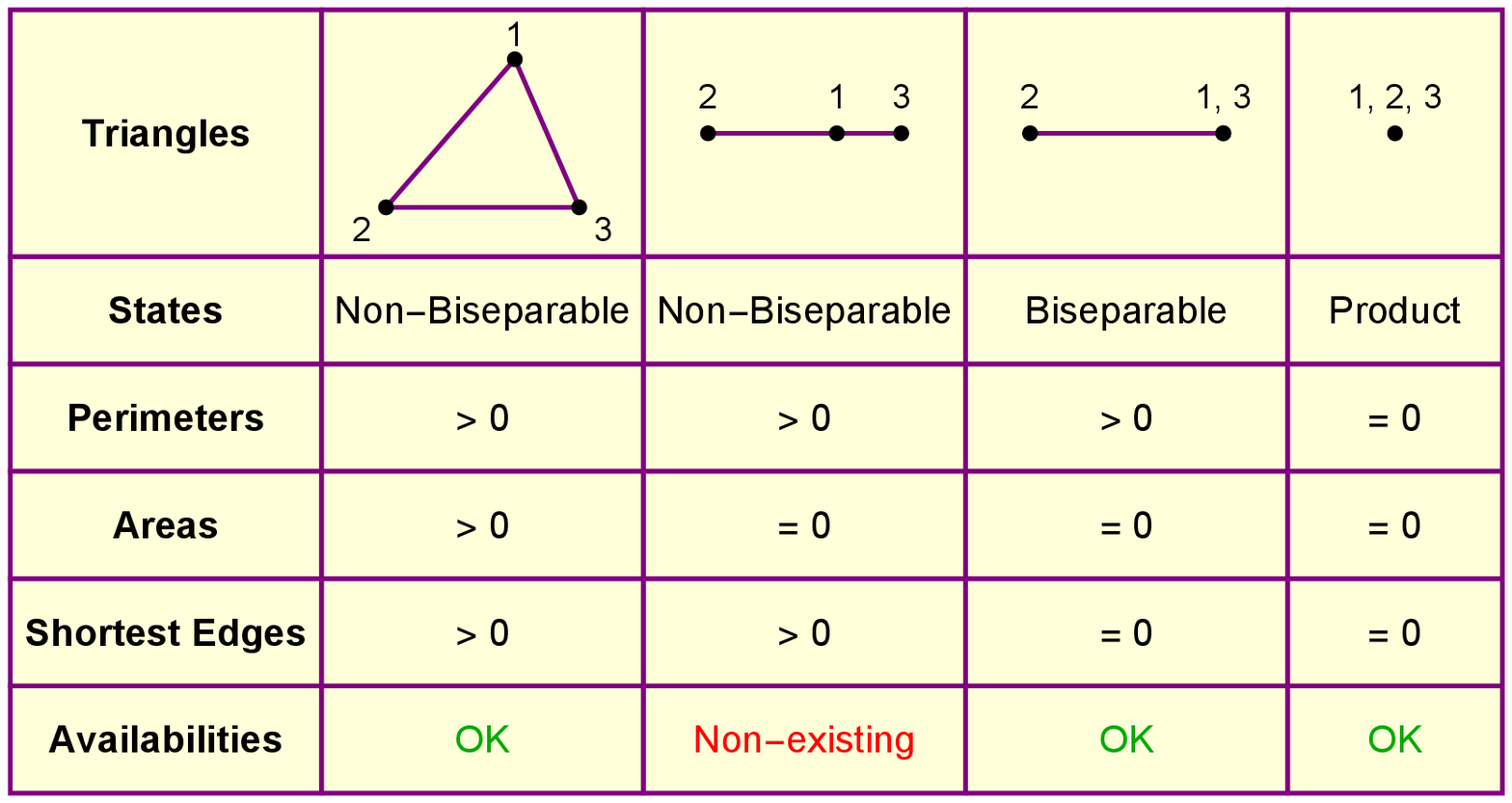}
    \caption{The table of possible states and their corresponding concurrence triangles. The values of three multipartite entanglement measures (perimeters for global entanglement, areas for concurrence fill and shortest edges for GMC) are compared with zero. One class of the non-biseparable triangles is proved to be non-existing by Thm. \ref{thm:noarea}.}
    \label{fig:table}
\end{figure*}

We provide a quick check of the $F_{123}$ measure in the following way. According to \cite{dur2000}, any pair of states in either GHZ class, or in W class, are ``stochastically equivalent'' in the sense that the conversion probability between the two states under LQCC is non-vanishing. This builds up strict rankings for the amount of entanglement within the two respective classes according to local monotonicity. However, a gap between the two classes remains since a state in GHZ class can never be converted into one in W class by LQCC, not even with only a very small probability of success, and vice versa, so there is no way to compare the entanglement for two states from the two distinct classes by using only local monotonicity. As an example, the representatives of the two classes are
\begin{eqnarray}
        |\text{GHZ}\rangle=\dfrac{1}{\sqrt{2}}\left(|000\rangle+|111\rangle\right),\nonumber\\
        |\text{W}\rangle=\dfrac{1}{\sqrt{3}}\left(|100\rangle+|010\rangle+|001\rangle\right),
\end{eqnarray}
which are the most entangled states in their respective classes. How shall we compare the entanglements for the two representatives? Helpfully for this, we employ the result shown by Joo {\it et al.} \cite{joo2003} that in three-party teleportation, the GHZ state can faithfully teleport an arbitrary single-qubit quantum state while the W state is relatively less capable, with a success rate less than 1. In this sense, we believe that one should require more than local monotonicity and conditions (a) and (b) by accepting a new condition:\\[0.7em]
\noindent (c) A GME measure ranks the GHZ state as more entangled than the W state.\\[0.7em]
\noindent Condition (c) is a bridge connecting the two distinct GHZ and W classes. A measure satisfying all the above conditions can be called a \textit{``proper'' } GME measure.

In fact, concurrence fill is maximized for the GHZ state, \textit{i.e.} $F_{123}$ =1, because the lengths of the three edges are all maximal, equal to 1. For the W state, $F_{123}$ is $64/81\approx 0.790$. The fact that concurrence fill correctly considers the GHZ state as more entangled than the W state conforms to condition (c), and thus $F_{123}$ can be regarded as a ``proper'' GME measure.\\

\noindent{\bf Comparisons of GME.}\quad Besides the ME measures mentioned in the introduction section which violate either condition (a) or (b), three GME examples already exist that satisfy both.

First {\it genuinely multipartite concurrence} (GMC), denoted $C_\text{GME}$, was advanced by Ma {\it et al.} \cite{ma2011} and further developed by Hashemi-Rafsanjani {\it et al.} \cite{rafsanjani2012}. The geometric interpretation is surprising: for three-qubit systems, $C_\text{GME}$ is exactly the square root of the length of the shortest edge of the concurrence triangle. For simplicity, in this work, we shall ignore this square root and treat $C_\text{GME}$ as the length of the shortest edge since the two resulting measures are obviously equivalent. From Fig. \ref{fig:table} we know that the shortest edge is zero for both biseparable and product states and is positive for non-biseparable states, and thus GMC is indeed a GME measure.

The second measure is the {\it generalized geometric measure} (GGM) identified by Sen(De) and Sen \cite{sende2010, sadhukhan2017}, which gives the distance between the given state and its closest biseparable state. Note that this is a generalization of the measure given by Wei and Goldbart \cite{wei2003}. GGM is quite similar to GMC in that they both give the minimal entanglement among all possible bipartitions, but with different bipartite entanglement measures. Since all bipartitions in three-qubit states must include one qubit as a subsystem, and all bipartite entanglement measures are equivalent in this one-qubit situation, GMC and GGM are equivalent for three-qubit cases. This means GMC and GGM will always give the same answer when comparing entanglements between two different three-qubit states.

The third measure is denoted as $\sigma$ by Emary and Beenakker \cite{emary2004}. Another surprising result is that $\sigma$ is actually the average of 3-tangle and GMC, {\it i.e.} $\sigma=(\tau+C_\text{GME})/2$. Thus we see that the three known measures are either equivalent or dependent. As a result, in what follows, we only need to compare concurrence fill and GMC.

In \cite{nielsen1999}, Nielsen pointed out that a pair of states in one class, although stochastically equivalent, can still be incomparable, meaning that the ranking of their entanglement cannot be judged simply by local monotonicity. We can move one step further and show that two GME measures, although both satisfying local monotonicity, can provide different opinions on the ranking of one specific pair of states, and thus are inequivalent. Indeed, concurrence fill and GMC are two inequivalent measures. In fact, for two arbitrary triangles, it is possible that one has a smaller area but a longer ``shortest edge'', while the other one has a bigger area but a shorter ``shortest edge''. Consider the following two states, both in GHZ class,
\begin{eqnarray}
       & |\psi_1\rangle = \dfrac{1}{\sqrt{2}}\sin(\dfrac{\pi}{5})|000\rangle+\dfrac{1}{\sqrt{2}}\cos(\dfrac{\pi}{5}) |100\rangle + \dfrac{1}{\sqrt{2}} |111\rangle,\nonumber \\
        & |\psi_2\rangle = \cos(\dfrac{\pi}{8})|000\rangle+\sin(\dfrac{\pi}{8})|111\rangle.
\end{eqnarray}
GMC considers that $|\psi_2\rangle$ is more entangled than $|\psi_1\rangle$ since $C_\text{GME}(\psi_2)=0.5>C_\text{GME}(\psi_1)=0.345$. However, concurrence fill considers the opposite due to the relation $F_{123}(\psi_2)=0.25<F_{123}(\psi_1)=0.393$. In this sense, GMC and concurrence fill are two inequivalent measures of tripartite entanglement. See details in Fig. \ref{fig:3tri}. Such inequivalence does not occur among two-qubit measures. It is new for three-qubit systems.

\begin{figure}[t]
    \centering
    \includegraphics[width=0.48\textwidth]{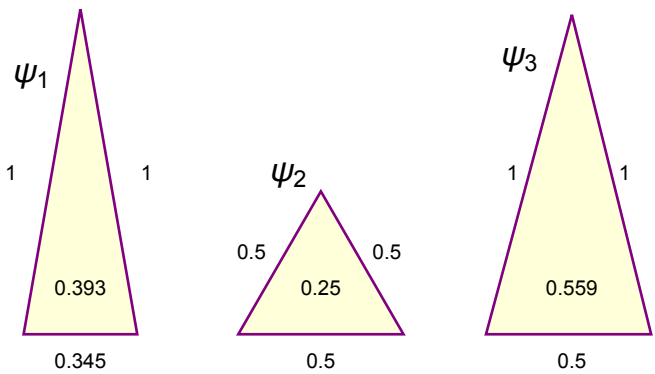}
    \caption{The concurrence triangles for $\psi_1$, $\psi_2$ and $\psi_3$ respectively. The lengths of the edges as well as the areas are shown at the appropriate locations.}
    \label{fig:3tri}
\end{figure}

By taking another glance at their definitions, one would naturally assume that concurrence fill contains more information than GMC does because $F_{123}$ depends on the lengths of all three edges but $C_\text{GME}$ only depends on the shortest one. In fact, consider the third state
\begin{equation}
    \begin{split}
        |\psi_3\rangle=&\dfrac{1}{2}|000\rangle+\dfrac{1}{2}|100\rangle+\dfrac{1}{\sqrt{2}}|111\rangle.
    \end{split}
\end{equation}
GMC cannot tell the difference between the entanglements of $|\psi_2\rangle$ and $|\psi_3\rangle$, saying that they are both 0.5, since the length of the shortest edge does not change. However, the overall triangle does change since the other two longer edges are different, but this is not detected by GMC. On the other hand, concurrence fill detects the entanglement for $|\psi_2\rangle$ as 0.25 which is much smaller than that of $|\psi_3\rangle$, given by 0.559. This can be easily visualized in Fig. \ref{fig:3tri}. In this sense, concurrence fill has an advantage over GMC.\\

\noindent{\bf Discussion and Summary.}\quad In summary, we have reviewed the conditions required by a three-party teleportation protocol that a genuine multipartite entanglement measure must satisfy. By exploiting a previously overlooked restriction for the distribution of one-to-other entanglements in the form of a theorem, we have advanced a multipartite entanglement measure $F_{123}$ for three-qubit states, with the name concurrence fill. By definition, concurrence fill is the area of the concurrence triangle, and is now shown to be a proper genuine multipartite entanglement measure, since it satisfies local monotonicity and all conditions (a), (b) and (c). It conforms to the ``proper" requirement, assigning greater entanglement to GHZ than W, which comes from the connection with the physical process of tripartite teleportation.

Finally we compared concurrence fill with another GME measure, called genuinely multipartite concurrence, which turns out to be the length of the shortest edge of the concurrence triangle. It was found that concurrence fill and GMC are two inequivalent measures in the sense that they do not always give the same result when answering the question whether one state is more entangled than the other one. A specific example was illustrated for the first time to show that two tripartite entanglement measures can be inequivalent due to the high dimensionality of the Hilbert space. One could argue that concurrence fill is a superior measure compared with GMC for the reason that it contains more information than GMC. The latter measure cannot detect the difference between entanglements of two states that are determined to carry different amounts of entanglement by concurrence fill.\\

\noindent{\bf Acknowledgements.}\quad We thank Prof. X.-F. Qian for several valuable discussions. Financial support was provided by National Science Foundation grants PHY-1501589 and PHY-1539859 (INSPIRE).

\appendix*
\section{APPENDIX: PROOF OF THEOREM \ref{thm:noarea}}
\setcounter{equation}{0}
\begin{proof}
We consider two types of bipartite entanglement, the squared concurrence $C^2$ and the normalized Schmidt weight $Y$ \cite{qian2018}. Their relation is given by
\begin{equation}
    Y(C^2)\equiv f(C^2)=1-\sqrt{1-C^2}.
\end{equation}
The first order derivative of $f(x)/x$ can be proved to be strictly positive when $x\in[0,1]$, and thus $f(x)/x$ is a strictly increasing function.

Suppose we have a concurrence triangle with zero area but none of the three edges have zero lengths. This means that the lengths of the three edges $a,b,c$ (assuming $c$ is the largest one) have to satisfy $c=a+b$, which means $c>a>0$ and $c>b>0$. This leads to
\begin{equation}\label{2inequ}
        \dfrac{f(a)}{a} < \dfrac{f(a+b)}{a+b} \quad {\rm and} \quad  \dfrac{f(b)}{b}<\dfrac{f(a+b)}{a+b}.
\end{equation}
By adding the two inequalities together, we have  $f(a)+f(b) < f(a+b)$,
or equivalently
\begin{equation}\label{smaller}
    Y(a)+Y(b) < Y(a+b) = Y(c).
\end{equation}

But remember that according to the entanglement polygon inequality in \cite{qian2018}, $Y(a)$, $Y(b)$ and $Y(c)$ are also the lengths of the three edges of a triangle, which means that
\begin{equation}\label{bigger}
    Y(a)+Y(b)\geq Y(c).
\end{equation}
Obviously, \eqref{bigger} violates \eqref{smaller}, and thus a zero area concurrence triangle cannot have all three edges with nonzero lengths. This is exactly what Thm. \ref{thm:noarea} states.
\end{proof}

\end{document}